\newcommand{\figfile}[1]{#1}%
\theoremstyle{plain}%
\newtheorem{theorem}{Theorem}[]%
\newcommand{\speed}{c_{\mathrm{ph}}}
\newcommand{\grspeed}{c_{\mathrm{gr}}}
\newcommand{\fspace}[1]{{#1}}
\newcommand{\iu}{\mathrm{i}}
\newcommand{\mhexp}[1]{\exp(#1)}
\newcommand{\ccinterval}[2]{\left[#1,\,#2\right]}
\newcommand{\oointerval}[2]{\left(#1,\,#2\right)}
\newcommand{\bigpar}{\par\quad\par}
\newcommand{\phase}{\varphi}
\newcommand{\calR}{\mathcal{R}}
\newcommand{\calS}{\mathcal{S}}
\newcommand{\calV}{\mathcal{V}}
\newcommand{\Zset}{\mathbb{Z}}
\newcommand{\Rset}{\mathbb{R}}
\newcommand{\Cset}{\mathbb{C}}
\newcommand{\dint}{\mathrm{d}}
\newcommand{\sgn}{\mathrm{sgn}\,}
\newcommand{\fin}{\mathrm{fin}}
\newcommand{\osc}{\mathrm{osc}}
\newcommand{\non}{\mathrm{non}}
\newcommand{\comp}{\mathrm{c}}
\newcommand{\per}{\mathrm{per}}
\newcommand{\average}[1]{{\langle{#1}\rangle}}
\newcommand{\baverage}[1]{{\big\langle{#1}\big\rangle}}
\newcommand{\eps}{\varepsilon}
\newcommand{\at}[1]{{({#1})}}
\newcommand{\bat}[1]{{\big({#1}\big)}}
\newcommand{\Bat}[1]{{\Big({#1}\Big)}}
\newcommand{\pair}[2]{{(#1,#2)}}
\newcommand{\bpair}[2]{{\big({#1,#2}\big)}}
\newcommand{\triple}[3]{{\left({#1,#2,#3}\right)}}
\newcommand{\al}{\alpha} 
\newcommand{\ka}{\kappa}
\newcommand{\Om}{\Omega}
\newcommand{\si}{\sigma}
\newcommand{\del}{\partial}
\newcommand{\R}{{\mathbb R}}
\newcommand{\N}{{\mathbb N}}
\newcommand{\abs}[1]{\left\lvert#1\right\rvert}
\newcommand{\jump}[1]{[\![#1]\!]}
\newcommand{\mean}[1]{\left\{#1\right\}}
\newcommand{\sect}\S
\def\defeq{:=}
\newcommand{\myparagraph}[1]{\paragraph*{#1}}
\newcommand{\url}[1]{\\{\tt #1}}
\begin{document}
\title{On selection criteria for problems with moving inhomogeneities}
\author{%
  Michael Herrmann\footnote{%
  Department of Mathematics, Saarland University, Germany,
  {\tt michael.herrmann@math.uni-sb.de}
  }
\and
Hartmut Schwetlick\footnote{%
  Department of Mathematical Sciences, University of Bath, United Kingdom,
  {\tt h.schwetlick@maths.bath.ac.uk}
  }
  \and
  Johannes Zimmer\footnote{%
  Department of Mathematical Sciences, University of Bath, United Kingdom,
  {\tt zimmer@maths.bath.ac.uk}
  }%
}%
\date{\today}
\maketitle
%
%
\begin{abstract}%

\end{abstract}%
We study mechanical problems with multiple solutions and introduce a
thermodynamic framework to formulate two different selection criteria in 
terms of macroscopic energy productions and fluxes. 
Simple examples for lattice motion are then studied to compare the implications for 
both resting and moving inhomogeneities.
%
%
\quad\newline\noindent%
\begin{minipage}[t]{0.15\textwidth}%
Keywords: %
\end{minipage}%
\begin{minipage}[t]{0.8\textwidth}%
\emph{%
selection criteria, radiation condition, causality principle,  \\
lattice dynamics, Young measures, hyperbolic scaling limits, phase transitions} %
\end{minipage}%
\medskip
\newline\noindent
\begin{minipage}[t]{0.15\textwidth}%
PACS: %
\end{minipage}%
\begin{minipage}[t]{0.8\textwidth}%
31.15.-p, 
31.15.xv, 
61.72.-y, 
62.30.+d  
\end{minipage}%
%
%
%
%
%
%
\section{Introduction}
%
Many equations describing mechanical systems exhibit more than one solution, and a common problem is then to select physically reasonable solutions. The central aim of the paper is to analyse selection conditions for problems with 
inhomogeneities, such as sources or phase interfaces.
\par%
In the literature one finds three commonly cited selection criteria. Two of them are reminiscent of Sommerfeld, who gives two physical reasonings of his \emph{radiation condition} for standing sources  \cite{Sommerfeld:49a,Sommerfeld:62a}. He first  states that sources have to be sources, and then stipulates that energy cannot flow in from infinity. The third selection criterion is the \emph{causality principle} introduced by Slepyan \cite[Appendix A]{Slepyan:01a}, see also the monograph~\cite{Slepyan:02a}, which determines the admissible group speeds on both sides of the inhomogeneity.  All these criteria have been applied mainly in the context of Fourier analysis. 
\par
In this paper, we present a different approach to selection criteria,
based on thermodynamic fields, which are derived
via  hyperbolic scaling limits and Young measures and describe the underlying 
mechanical systems on large spatial and temporal scales. In particular, we show that
Sommerfeld's two formulations of his radiation condition are naturally related to rather different fields such as \emph{production of oscillatory energy} and \emph{radiation flux}.  Our framework covers arbitrary nonlinear Hamiltonian lattices and PDEs, but we have to specialise to linear or almost linear cases in order
to obtain explicit expressions for the thermodynamic fields. 
\par%
While we recover Sommerfeld's results for standing sources,
the case of a moving phase interface turns out to be remarkably different. Specifically, in the first case we find that both thermodynamic selection criteria are equivalent and agree with the causality principle. In the second case, however,
the condition on radiation (energy) fluxes can be in contradiction to the entropy inequality, the production criterion and the causality principle. Thus
the energy flux criterion should not be applied uncritically to problems with moving inhomogeneities.
\par
We present our exposition along two guiding examples of nearest neighbour (NN) chains of atoms. The governing equations are
\begin{align}
  \label{Eqn:ForcedNN}
  \ddot{x}_j\at{t}=
  \Phi^\prime\at{x_{j+1}\at{t}-x_j\at{t}}-\Phi^\prime\at{x_{j}
    \at{t}-x_{j-1}\at{t}} +\zeta\at{t}\delta_{j0},
\end{align}
where $\zeta$ is the external forcing of atom $j=0$, and the two examples are $\at{i}$ a harmonic chain with periodic forcing
$\zeta$, $\at{ii}$ a moving phase interface in a chain with bi-quadratic interaction potential $\Phi$ and no external forcing.

\myparagraph{Sommerfeld's radiation condition}

Sommerfeld's approach to radiation conditions is described in his
book~\cite[\S28]{Sommerfeld:62a}, see also~\cite{Sommerfeld:49a}, and we thus just describe
the gist of some key arguments. However, the discussion given in
\sect\ref{sec:Stat-Somm-probl} for the case of a lattice model closely resembles the original spatially
continuous case, and we refer the reader to that section for a more detailed
mathematical treatment.

Sommerfeld considers the wave equation on $\R^3$ with a temporally periodic forcing at the
origin. A separation of variables ansatz leads to the Helmholtz equation with a source at the
origin, thus an inhomogeneous equation. The corresponding homogeneous equation has nontrivial
bounded kernel functions (in $\R^3$). Some kernel functions are excluded by boundary or symmetry
conditions, in Sommerfeld's case by requiring that the solution is radially symmetric and decaying at infinity. This still does not single out a
unique solution.  In particular, there are two singular and decaying solutions $u^\pm$ to the
inhomogeneous Helmholtz equation, which are well-defined outside the origin. Let $x^\pm$ be
the corresponding solutions of the forced wave equation. The choice of the direction of time
made in the separation of variables \emph{ansatz} renders the corresponding solution $x^+$
an \emph{outwardly} radiating radial wave and $x^-$ an \emph{inwardly} radiating wave. In
fact, $x^+$ describes a \emph{source} solution to the forced wave equation in the sense that the
forcing supplies energy to the system. On the contrary, $x^-$ corresponds to a sink solution,
as the forcing now deprives the system of energy.

Sommerfeld now introduces a \emph{binary} choice, allowing only waves which propagate
outwards and dismissing those which propagate inwards. This selection is necessary both for
physical and mathematical reasons: mathematically, the choice is an integral part of the
arguments leading to a unique fundamental solution of the forced wave equation. Physically the two solutions $x^\pm$ are qualitatively very different.  In Sommerfeld's
words~\cite{Sommerfeld:62a},
\begin{equation}
  \label{eq:SOM1}
  \tag{SOM1}
  \begin{minipage}{12cm}
    ``Quellen sollen \emph{Quellen}, nicht Senken der Energie sein.''     [Sources have to be
\emph{sources}, not sinks of the energy.]
  \end{minipage}
\end{equation}
We call this the \emph{first formulation} of Sommerfeld's radiation condition. He then gives
what we call the \emph{second formulation},
\begin{equation}
  \label{eq:SOM2}
  \tag{SOM2}
  \begin{minipage}{12cm}
    ``Die von den Quellen ausgestrahlte Energie mu\ss\ sich ins     Unendliche zerstreuen,
\emph{Energie darf nicht aus dem       Unendlichen in die vorge\-schriebenen Singularit\"aten
des       Feldes eingestrahlt werden}.'' [The energy radiated from the     sources has to scatter
to infinity, \emph{energy must not be       radiating from infinity into the prescribed
singularities of the       field}.]
  \end{minipage}
\end{equation}

\myparagraph{Thermodynamic approach }

In \sect\ref{sec:Macr-field-equat}  we establish a thermodynamic framework to analyse~\eqref{eq:SOM1} and~\eqref{eq:SOM2}. The starting point are \emph{discrete local conservation laws}, which can be derived from 
\eqref{Eqn:ForcedNN} and describe the transport of mass, momentum and energy on the \emph{microscopic scale}. In a first step we introduce a small scaling parameter $\eps$ and use the concept of \emph{Young measures} to establish the macroscopic limit $\eps\to0$. In this limit the microscopic conservation laws converge to their macroscopic counterparts, which are PDEs with densities, fluxes and productions. 
\par
In a second step we identify the macroscopic quantities
that admit a thermodynamic interpretation of Sommerfeld's radiation condition. 
The \emph{radiation flux} $Q$ is defined as the Galilean invariant part of the energy flux and~\eqref{eq:SOM1} is naturally related to the sign of $Q$ on both sides of the inhomogeneity. We also introduce 
the \emph{oscillatory energy} $E_\osc$, which measures 
the amount of macroscopic energy stored in microscopic oscillations, and
the \emph{non-oscillatory energy} $E_\non$. Sommerfeld's first formulation of the radiation condition can now be reinterpreted in terms of energy productions. More precisely, for forced excitation problems as studied by Sommerfeld for the wave equation, \eqref{eq:SOM1} is naturally related to the sign of the local production of total energy $E=E_\osc+E_\non$ due to the external forcing.
The situation is different for phase transition waves as these are not driven by external forces and conserve the total energy exactly (in a local sense). Instead, the key energetic phenomenon is now that the propagating phase interface causes a constant transfer between oscillatory and non-oscillatory energy. This transfer gives rise to a production of oscillatory energy, and~\eqref{eq:SOM1} is naturally related to the sign of this production.
\par
Notice that the above arguments are solely based on energy productions and fluxes, that means on fundamental concepts from continuum mechanics and thermodynamics. They can therefore, at least in principle,
be also applied to more general systems, as for instance NN chains with generic double-well potentials. However, only in some special cases it is also possible to derive \emph{explicit} expressions for all densities, fluxes and production terms. 
\par
We finally emphasise that our macroscopic interpretations of \eqref{eq:SOM1} and \eqref{eq:SOM2} provide only \emph{necessary} conditions. To ensure
uniqueness, in particular to exclude linear combinations of $x^+$ and $x^-$ in Sommerfeld's forced excitation problem, one typically imposes \emph{microscopic selection criteria}, which also exploit non-macroscopic properties of solutions. Examples are the asymptotic radiation condition~\eqref{eq:som-3-lat}  or, equivalently, the causality principle. Both methods are inextricably connected with Fourier methods. It is therefore not clear how to generalise these selection criteria to nonlinear problems. 
\par
%

\myparagraph{Results for the special examples}
\par

In \sect\ref{sec:Stat-Somm-probl} we apply the thermodynamic framework to the analogue of
Sommerfeld's problem in the harmonic NN chain and find that \eqref{eq:SOM1} and \eqref{eq:SOM2} are
equivalent and comply with the microscopic selection criterion. Afterwards we study phase transition waves in \sect\ref{sec:Radi-cond-trav} and
show that both conditions provide different selection criteria for moving phase interfaces.
We show that~\eqref{eq:SOM1} is equivalent to the usual entropy inequality, and thus
we propose to apply~\eqref{eq:SOM1} to problems with moving inhomogeneities as well.  The application of~\eqref{eq:SOM2}, however, turns out to be problematic. Specifically, we show that for subsonic wave speed close to the speed of sound, the selection criterion~\eqref{eq:SOM2} selects a phase transition wave that \emph{violates} the entropy inequality as well as the causality principle. We thus conclude that~\eqref{eq:SOM2} has to be discarded, at least for some problems with moving inhomogeneities. 
%
%
\section{Macroscopic field equations in the presence of microscopic oscillations}
\label{sec:Macr-field-equat}

Our approach to Sommerfeld's radiation condition in \sect\ref{sec:Stat-Somm-probl} and
\sect\ref{sec:Radi-cond-trav} is based on macroscopic balance laws that
govern the effective dynamics of averaged quantities on large spatial and temporal scales.
In this section we derive and discuss these balance laws and provide the atomistic
expressions for all densities, fluxes and production terms. Our exposition is formal but we
emphasise that all arguments can be made rigorous using Young measures; see
\cite{DH08,AMSMSP:DHR} and Appendix~\ref{app:1}. For the sake of clarity, we start with the
conservation laws for the unforced NN chain, that is~\eqref{Eqn:ForcedNN} with $\zeta=0$.

\myparagraph{Microscopic conservation laws}

We can rewrite~\eqref{Eqn:ForcedNN} with $\zeta=0$ in terms of the atomic distances
(discrete strains) $r_j\defeq x_{j+1}-x_j$, and velocities $v_j\defeq \dot x_j$ as first order
equations
\begin{align}
  \label{eq:DCL.MassMom}
  \dot{r}_j=v_{j+1}-v_{j},\quad
  \dot{v}_j=\Phi^\prime\at{r_j}-\Phi^\prime\at{r_{j-1}},
\end{align}
which can be viewed as the discrete counterparts of the local conservation laws for mass and
momentum in Lagrangian coordinates, see~\eqref{Eqn:Macro.Cons.Laws}.
We note that~\eqref{eq:DCL.MassMom} constitute \emph{local conservation laws} in the sense that the time derivate of some atomistic observable equals the discrete divergence of another observable. As a consequence we find, assuming appropriate decay  conditions for $j\to\pm\infty$, that each local conservation law implies a global one by summing over $j$. The discrete divergence structure is also an important ingredient for the thermodynamic limit because it allows us to derive the macroscopic conservation laws, see Appendix~\ref{app:1}.
\par%
Since the unforced NN chain is
an autonomous Hamiltonian system with shift symmetry, we can also derive a local conservation law for the energy, namely
\begin{align}
  \label{eq:DCL.Energy}
  \tfrac{\dint}{\dint{t}}\bat{\tfrac{1}{2}v_j^2+\Phi\at{r_{j-1}}}
  =
  v_j\Phi^\prime\at{r_j}-v_{j-1}\Phi^\prime\at
  {r_{j-1}}.
\end{align}
To characterise the thermodynamic properties of NN chains, we now derive a macroscopic
description by applying the \emph{hyperbolic scaling} of space and time. For a given scaling
parameter $0<\eps\ll{1}$, we define the
\emph{macroscopic time} $\tau$ and the \emph{macroscopic particle index} $\xi$ by
\begin{align}
  \label{eq:NN.Scaling}
  \tau=\eps{t},\quad\xi=\eps{j},
\end{align}
but we do \emph{not} scale distances and velocities. We then regard the atomistic data that
correspond to a solution of~\eqref{eq:DCL.MassMom} as functions $r_\eps$ and $v_\eps$ that depend continuously on
$\tau$ and are piecewise constant in $\xi$. That is, we identify
$r_j\at{t}=r_\eps\pair{\eps{t}}{\eps{j}}$ and $v_j\at{t}=v_\eps\pair{\eps{t}}{\eps{j}}$.

Of course, the functions $r_\eps$ and $v_\eps$ will, in general, be highly oscillatory with
wave length of order $\eps$ and do not converge as $\eps\to0$ in a pointwise sense. However,
as long as the solution to~\eqref{eq:DCL.MassMom} is \emph{bounded} we can assume, thanks to
weak compactness, that for any \emph{atomic observable} $\psi=\psi\pair{r}{v}$ the functions
$\psi\pair{r_\eps}{v_\eps}$ converge weakly as $\eps\to0$,  see Appendix~\ref{app:1}. The
limit function $\average{\psi}$ is then non-oscillatory and can be regarded as the
\emph{thermodynamic  field} of $\psi$, which means that $\average{\psi}\pair{\tau}{\xi}$ gives the
local mean value of $\psi$ in the macroscopic point $\pair{\tau}{\xi}$.

\myparagraph{Macroscopic conservation laws}

In the thermodynamic limit $\eps\to0$ the discrete conservation laws~\eqref{eq:DCL.MassMom}
and~\eqref{eq:DCL.Energy} transform into
\begin{align}
  \label{eq:NN.Limit}
  \partial_\tau\average{r}=\partial_\xi\average{v},\quad
  \partial_\tau\average{v}=\partial_\xi\baverage{\Phi^\prime\at{r}},\quad
  \partial_\tau\baverage{\tfrac{1}{2}v^2+\Phi\at{r}}
  =\partial_\xi\baverage{v\Phi^\prime\at{r}}.
\end{align}
These PDEs describe the local conservation laws for mass, momentum and energy on the
macroscopic scale and are well known within the thermodynamic theory of elastic bodies. In
fact, they can be written as
\begin{align}
  \label{Eqn:Macro.Cons.Laws}
  \partial_\tau{R}-\partial_\xi{V}=0,\quad
  \partial_\tau{V}+\partial_\xi{P}=0,\quad
  \partial_\tau{E}+\partial_\xi{F}=0,\quad
\end{align}
with \emph{macroscopic strain} $R=\average{r}$, \emph{macroscopic   velocity}
$V=\average{v}$, \emph{pressure} $P=-\baverage{\Phi^\prime\at{r}}$, \emph{total energy density}
$E=\baverage{\tfrac{1}{2}v^2+\Phi\at{r}}$, and \emph{total energy flux}
$F=-\baverage{v\Phi^\prime\at{r}}$. Moreover, splitting off the Galilean invariant part from
both the energy density and the energy flux, we find
\begin{align}
  \label{eq:split-Gali}
  E=\tfrac{1}{2}V^2+U,\quad
  F=VP+Q,
\end{align}
with \emph{internal energy density} $U$ and heat flux $Q$. \emph{Radiation}
in the sense of Sommerfeld precisely means energy transport via $Q$, see also
\sect\ref{sec:Stat-Somm-probl}, and therefore we call $Q$ the \emph{radiation flux}.

We emphasise that, in general, the conservation laws~\eqref{Eqn:Macro.Cons.Laws} do
\emph{not} constitute a closed system, but must be accompanied by \emph{closure relations}.
Unfortunately, very little is known about the thermodynamic limit for most initial data and
general interaction potentials. In some cases, however, it is possible to solve the closure
problem. Below we show that all thermodynamic fields can be computed explicitly for $\at{i}$
Sommerfeld's fundamental solution in forced harmonic NN chains and $\at{ii}$ phase transition
waves in NN chains with bi-quadratic potential.

\myparagraph{Oscillatory energy}

For our purposes, the split~\eqref{eq:split-Gali} is not sufficient; it is convenient to introduce another split of the total energy density $E$ that
accounts for the fact that the computations of local mean values and nonlinearities do not
commute in the presence of oscillations. To obtain a precise measure for the strength of the
oscillations we write
\begin{align*}
  E=E_\non+E_\osc
  ,\qquad
  E_\non=\tfrac{1}{2}V^2+\Phi\at{R},
\end{align*}
which means
\begin{align}
  \label{eq:PartialEnergies}
  E_\non=\tfrac{1}{2}\average{v}^2+\Phi\bat{\average{r}}
  ,\qquad%
  E_\osc=\tfrac{1}{2}\baverage{\bat{v-\average{v}}^2}
  +\baverage{\Phi\at{r}-\Phi\bat{\average{r}}}
\end{align}
and furthermore implies $U=\Phi\bat{\average{r}}+E_\osc$. We refer to $E_\osc$ and $E_\non$ as oscillatory and non-oscillatory energy density,
respectively, and emphasise that $E_\osc$ measures precisely the amount of macroscopic energy
that is locally stored within the oscillations.  From \eqref{eq:NN.Limit} and
\eqref{eq:PartialEnergies} we now conclude that the partial energies are balanced by
\begin{align}
  \label{eq:production.e1}
  \partial_\tau{E_\osc}+\partial_{\xi}Q=\Xi,\qquad
\partial_\tau{E_\non}+\partial_{\xi}\at{PV}=-\Xi,
\end{align}
where the production terms are given by 
\begin{align*}
  {\Xi}=-\at{P+\Phi^\prime\at{R}}\partial_\xi{V}
  =\Bat{\baverage{
      \Phi^\prime\at{r}}-\Phi^\prime\bat{\average{r}}}\partial_\xi\average{v}\,.
\end{align*}
The physical meaning of $\Xi$ becomes apparent when we compute the
temporal change of the oscillatory and non-oscillatory energy contained in a control volume $\Om=\ccinterval{a}{b}$. This
 gives
\begin{align*}
\frac{\dint}{\dint{\tau}}\int_a^b E_\osc\dint\xi+Q|^{\xi=b}_{\xi=a}=
\int_a^b \Xi\,\dint\xi=-\frac{\dint}{\dint{\tau}}\int_a^b E_\non\dint\xi-\at{PV}|^{\xi=b}_{\xi=a}\,.
\end{align*}
Since the flux terms $Q$ and $PV$ are related to the exchange of energy with the exterior of $\Om$, we conclude that $\Xi\pair{t}{\xi}$ measures  precisely how much non-oscillatory energy is transferred into oscillatory energy at time $\tau$ at the point $\xi$. 
\par%
We further mention that the second law of thermodynamics is, at least on an intuitive level, related to 
a sign condition for $\Xi$. While each scaling limit of the unforced NN chain conserves the total energy
according to \eqref{eq:NN.Limit}$_3$, there exist many solutions that yield a non-vanishing field $\Xi$ in the limit $\eps\to0$.
For $\Xi\geq0$ these solutions `dissipate' non-oscillatory energy (at the expense of increasing oscilllatory energy);
this is in perfect agreement with our physical intuition since
oscillations can be regarded as microscopic fluctuations. If time is
reversed, however, $\Xi$ becomes negative and the solution extracts
energy from the fluctuations, which contradicts the intuition. This is
discussed in more detail in~\sect\ref{sec:Stat-Somm-probl} and~\sect\ref{sec:Radi-cond-trav}.

\myparagraph{Thermodynamic fields for harmonic oscillations}

 With the central quantities $E_\osc$ and $E_\non$ now being defined, we can return to the analyis of thermodynamic fields. Motivated by the discussion in \sect\ref{sec:Stat-Somm-probl} and
\sect\ref{sec:Radi-cond-trav}, we restrict the analysis and compute the thermodynamic fields for \emph{travelling
waves} in harmonic NN chains.  A travelling wave for the NN chain is an exact solution
to~\eqref{Eqn:ForcedNN} with $\zeta=0$ that satisfies
\begin{align}
  \label{TWEqn}
  r_j\at{t}=\calR\at{j-\speed{t}},\quad
  v_j\at{t}=\calV\at{j-\speed{t}}
\end{align}
for some \emph{phase speed} $\speed$ and \emph{profile functions} $\calR$ and $\calV$ that
depend on the phase variable $\phase=j-\speed{t}$. Travelling waves in NN chains are determined by
advance-delay differential equations, see~\cite{Filip:99a,Dreyer:04a,Her10a}, and describe
fundamental \emph{oscillatory patterns}. 
\par
The hyperbolic scaling \eqref{eq:NN.Scaling} implies that each travelling wave solution for the harmonic NN chain \eqref{TWEqn} converges as $\eps\to0$ to a Young measure that is constant in space and time. The thermodynamic fields of all observables $\psi$ are therefore independent of $\pair{\tau}{\xi}$ and can be computed from $R$ and $V$ as follows 
\begin{align}
  \label{Eqn:TW.MeanValues}
  \average{\psi}\defeq\lim\limits_{L\to\infty}
  \frac{1}{2L}\int_{-L}^{L}
  \psi\pair{\calR\at\phase}{\calV\at\phase}
  \,\dint\phase.
\end{align}
\par
For a harmonic NN chain, which has the  interaction potential $\Phi(r) =
\tfrac{1}{2}{c_0^2}{r^2}+d_1{r}+d_0$, we immediately verify by Fourier transform that for
prescribed $\speed$ with $0<\abs{\speed}<c_0$ travelling waves are given by
\begin{align}
  \begin{split}
    \label{Eqn:TW.Harmonic}
    \calR\at{\phase}=R
    +\sum_{i=1}^M{A}_i\cos\at{\kappa_i\phase+{\kappa_i/2}+\eta_i},\qquad
    \calV\at{\phase}=V\mp{c_0}
    \sum_{i=1}^M{A}_i\cos\at{\kappa_i\phase+\eta_i},
  \end{split}
\end{align}
with ``$-$'' and ``$+$'' for left and right moving waves, respectively, that is, for
$\speed<0$ and $\speed>0$, respectively. Here the wave numbers $\kappa_i$, $i=1, \dots, M$, denote
the positive solutions to $\speed^2{k}^2=\Omega\at{k}^2$, where
\begin{math}
  \Omega\at{k}= 2 c_0 \sin\at{k/2}
\end{math} %
is the dispersion relation of the harmonic NN chain. In particular, near sonic waves with
$\speed\approx{c_0}$ have $M=1$ and depend, up to the phase shift $\eta_1$, on the four independent
parameters $R$, $V$, $A=A_1$, and $\kappa=\kappa_1$.
\par
The thermodynamic fields for such harmonic travelling waves can easily be computed by
\eqref{Eqn:TW.MeanValues} and \eqref{Eqn:TW.Harmonic}. In fact, thanks to $\average{r}=R$ and
$\average{v}=V$, we find
\begin{align}
  \label{eq:P-harm}
  P &= 
  -c_0^2R-d_1
  ,\quad%
  E_\non=
  \tfrac12{V}^2 +\tfrac{1}2c_0^2{R}^2+d_1{R}+d_0
  ,\quad%
\end{align}
as well as
\begin{align}
  \label{TDTW:OscDensAndFlux}
  E_\osc=
  \tfrac{1}{2}c_0^2A^2,\quad
  Q&=\pm\Bat{c_0\sum_{i=1}^M\frac{A_i}{A}\cos\at{\kappa_i/2}}E_\osc
\end{align}
with $A^2=\sum_{i=1}^M{A}_i^2$. For periodic waves with $M=1$ we therefore have
$Q=\grspeed{E_\osc}$, where $\grspeed=\pm\abs{\Omega^{\prime}\at{\kappa}}$ is the \emph{group
speed}.
\par
We emphasise that the thermodynamic computations presented above can be extended to
superpositions of finitely many harmonic travelling waves, with obvious modifications.  We also mention
that a complete characterisation of the energy transport in harmonic lattices can be derived
in terms of Wigner-Husimi measures~\cite{Mielke:06a}.
\myparagraph{Macroscopic description of forcing}

In order to generalise the formalism from above to forced NN chains we assume, for
simplicity, that the forcing acts only in the particle $j=0$, see~\eqref{Eqn:ForcedNN}, and
that $\zeta$ is periodic with
\begin{align}
  \label{App:Forcing}
  \zeta\at{t}=\zeta\at{t+t_\per},\qquad
  \int_0^{t_\per}\zeta\at{t}\,\dint{t}=0.
\end{align}
These conditions guarantee that the forcing does not contribute to the macroscopic
conservation laws for mass,~\eqref{eq:DCL.MassMom}$_1$, and
momentum,~\eqref{eq:DCL.MassMom}$_2$. The forcing, however, in general supplies some energy to
the system, and hence the conservation law~\eqref{eq:DCL.MassMom}$_3$ must be replaced by
\begin{align}
  \label{Eqn.EnergyBalanceWithForcing}
  \partial_\tau{E}+\partial_\xi{F}=
  \theta\at{\tau}\delta_0\at{\dint\xi}.
\end{align}
The \emph{macroscopic energy production} at $\xi=0$ can be computed either as the jump of the
macroscopic energy flux at $\xi=0$ or by averaging the microscopic energy production. This reads
\begin{align}
  \label{eq:production2}
  \theta\at{\tau}=F\pair{\tau}{0+}-F\pair{\tau}{0-}=
  \lim_{\delta\to0}\lim_{\eps\to0}\frac{\eps}{2\delta}
  \int_{\at{\tau-\delta}/\eps}^{\at{\tau+\delta}/\eps}
  v_0\at{t}\zeta\at{t}\,\dint{t}.
\end{align}

%
\section{The forced harmonic NN chain}
\label{sec:Stat-Somm-probl}

Here we present the analogue to Sommerfeld's classical problem in harmonic NN chains, that
is, the localised forced excitation problem
\begin{align}
  \label{eq:som-lat-eq}
  \at{\del_t^2-c_0^2\Delta_1} x_j(t)=\zeta(t)\delta^0_j,
\end{align}
where $\Delta_1$ denotes the discrete Laplacian $\Delta_1 x_j\defeq x_{j+1}+x_{j-1} -2x_j$.
For simplicity we normalise the speed of sound to $c_0=1$ and assume that the chain is
periodically forced at one of its eigenfrequencies $\si$ with $0<\si<2$.  The analysis presented in this section resembles that of the spatially continuous case studied by Sommerfeld~\cite{Sommerfeld:62a}; the purpose of this section is to familiarise readers with our approach to selection criteria, and to show that we recover Sommerfeld's classical results for a standing source in a spatially discrete medium.

\myparagraph{Explicit solutions via Helmholtz equation}

The separation of variables \emph{ansatz} $x_j\at{t}=\mathrm{Re}\at{u_j\mhexp{-\iu\si{t}}}$
transforms \eqref{eq:som-lat-eq} into the discrete Helmholtz equation
\begin{align}
  \label{Eqn:NN.Inhom.Helmholtz}
  \si^2u_j+{\Delta_1}u_j=\delta_{0},
\end{align}
which can be solved by Fourier transform. There exist two \emph{special solutions} $u^+$ and
$u^-$ defined by
\begin{align}
  \notag
   u_j^\pm=\pm\frac{\exp\at{\pm\iu\kappa\abs{j}}}%
   {\iu2\Om\at{\kappa}\Om^\prime\at{\kappa}},\qquad
\end{align}
where $\kappa=\kappa\at{\sigma}$ denotes the unique solution to
\begin{align}
  \notag
  \si^2=\Om\at{k}^2, \qquad \Om\at{k}=2\sin\at{k/2}, \qquad
  0<k<\pi.
\end{align}
\par
Of course, the special solutions $u^-$ and $u^+$ can be linearly combined and also
superimposed by plane waves with wave numbers $\pm\kappa$, which are the kernel functions of
the discrete Helmholtz operator. The general solution to \eqref{Eqn:NN.Inhom.Helmholtz} can
therefore be parameterised by $\alpha,\beta\in\Cset$ as
\begin{align}
\label{eq:som-lat.GeneralSolution}
u_j = u_j\pair{\al}{\beta} = u^+_j + \alpha\exp\at{-i\kappa j} + \beta \exp\at{+ i\kappa j}.
\end{align}
Note that in particular $u^+_j = u_j(0,0)$ and $u^-_j =u_j(\al^-, \beta^-)$ with
$\al^-=\beta^-=-\bat{\iu2\Om\at\ka\Om'\at\ka}^{-1}$.
\par
Sommerfeld's approach to the radiation condition can be viewed as the endeavour to remove the
non-uniqueness and to single out a unique choice for $\alpha$ and $\beta$. To this end, he
introduces a microscopic selection criterion, whose analogue in the harmonic NN chains reads
\begin{align}
  \label{eq:som-3-lat}
  \lim_{j\to+\infty}
\Bat{\frac{\dint{u}_{j}}{\dint j}-\iu\ka{u_{{j}}}}=0, \qquad
\lim_{j\to-\infty}\Bat{\frac{\dint{u}_{j}}{\dint j}+\iu\ka{u_{{j}}}}=0,
\end{align}
and implies that $\alpha=\beta=0$ in \eqref{eq:som-lat.GeneralSolution}. In
particular, the microscopic
radiation condition selects $u^+$ but rules out $u^-$.

\myparagraph{Macroscopic aspects of Sommerfeld's radiation condition}

We now show that the binary choice between $u^-$ and $u^+$ can be understood in terms of
purely macroscopic conditions on the production of oscillatory energy and the direction of
the radiation fluxes. As remarked in \sect\ref{sec:Macr-field-equat}, the thermodynamic
framework can be extended to superpositions of harmonic waves. It is thus 
possible to characterise all bounded solutions to \eqref{eq:som-lat-eq}, in particular kernel
functions and their superpositions. The result of such an analysis is that the thermodynamic
interpretation of \eqref{eq:SOM1} and \eqref{eq:SOM2} rejects superpositions of waves as long
as their influx contribution exceeds the outward contribution. Thus, a half space of all
bounded solutions is rejected, and a half-space accepted. We show this analysis here in
detail for the two extreme cases corresponding to $u_-$ and $u_+$. The analysis for the other
solutions is, \emph{mutatis mutandis}, analogous yet more complicated terms arise.

\par%
At first we notice that $u^+$ and $u^-$ correspond to the real-valued displacements
\begin{align}
  \notag
  x_j^\pm(t) = 
  \frac{\sin(\kappa\abs{j} \mp \sigma t)}{2\Om(\kappa) \Om'(\kappa)}.
\end{align}
Each of these solutions to \eqref{eq:som-lat-eq} consists of two counter-propagating
travelling waves that are glued together at $j=0$, where the travelling waves propagate
towards and away from the inhomogeneity at $j=0$ for $x^+$ and $x^{-}$, respectively. We also
notice that $x^+$ and $x^-$ transform into each other under time reversal, and that they define the atomic distances and velocities
\begin{align}
\label{eq:som-lat-sol.DistVel}
v^\pm_j\at{t}=\mp{A}\cos\at{\kappa\abs{j}\mp\sigma{t}},\qquad
r^\pm_j\at{t}=A
\cos\at{\kappa\abs{j+\tfrac12}\mp\sigma{t}}
\end{align}
where the amplitude $A>0$ is given by $1/A=2\Om^\prime\at{\kappa}=2\cos\at{\kappa/2}$.
\begin{figure}[t!]
\centering{%
\includegraphics[width=0.45\textwidth]%
{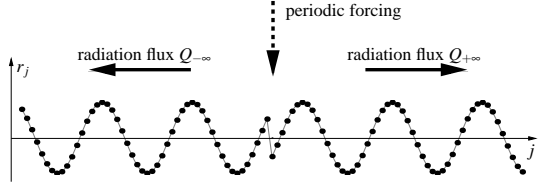}}%
\caption{%
Sommerfeld's \emph{source solution} for the harmonic NN chain: The energy pumped in by the periodic
forcing is radiated towards both $+\infty$ and $-\infty$. If time is reversed
the source becomes a sink and the radiation fluxes on both sides change their sign.}%
\label{fig:cartoon_1}
\end{figure}
\par
To determine the thermodynamic limit $\eps\to0$, the key observation is that both $x^-$ and $x_-$
generate Young measures that are $\at{i}$ independent of the macroscopic time
$\tau=\eps{t}$, $\at{ii}$ constant for $\xi<0$ and $\xi>0$, where $\xi = \eps{j}$ is the macroscopic particle index, and $\at{iii}$ generated by periodic travelling waves. These assertions follow directly from \eqref{eq:som-lat-sol.DistVel} and the definition of Young measure convergence, see \sect\ref{sec:Macr-field-equat} and Appendix \ref{app:1}. They also
imply that the macroscopic conservation laws \eqref{Eqn:Macro.Cons.Laws} are trivially satisfied for $\xi\neq0$. The relevant macroscopic information is therefore encoded in jump conditions for the thermodynamic fields at $\xi=0$. 

Using \eqref{eq:P-harm} and \eqref{TDTW:OscDensAndFlux} we now conclude that almost all
thermodynamic fields are globally constant with
\begin{align*}
  R={0}
  ,\quad%
  V=0
  ,\quad%
  P={0}
  ,\quad%
  F={Q}
  ,\quad%
  E_\non={0}
  ,\quad%
  E={E_\osc}
  ,\quad%
  E_\osc=\tfrac{1}{2}A^2,
\end{align*}
while the radiation flux $Q$ is piecewise constant,
$Q\pair{\tau}{\xi}=\pm\mathrm{sign}\at{\xi}\Om^\prime\at\kappa{E_\osc}=:Q_{\sgn\at{\xi}\infty}$.
The two values for $Q$ are given by
\begin{align*}
  Q_{-\infty}=\mp\tfrac{1}{4}A
  ,\quad%
  Q_{+\infty}=\pm\tfrac{1}{4}A,
\end{align*}
and computing the macroscopic energy production by averaging, see \eqref{eq:production2}, we
find
\begin{align*}
  \theta=
  -\frac{\si}{2\pi}\int\limits_0^{2\pi/\si}v_0\at{t}\cos\at{\si{t}}\,\dint{t}=
  \pm\frac{\si{A}}{2\pi}\int\limits_0^{2\pi/\si}\cos\at{\sigma{t}}^2\,\dint{t}=
  \pm\tfrac{1}{2}A.
\end{align*}
Sommerfeld's first condition~\eqref{eq:SOM1} is naturally related to the sign of $\theta$.
For $x^+$ we have $\theta>0$, so the forcing pumps in energy at $j=0$ and the solution
describes a \emph{source} of the energy. The solution $x^-$, however, corresponds to a
\emph{sink} as $\theta<0$ implies that energy flows out constantly at $j=0$. Moreover, for
the solutions at hand the balance of total energy reduces to
\begin{align*}
  \theta=Q_{+\infty}-Q_{-\infty}=2Q_{+\infty}=-2Q_{-\infty},
\end{align*}
and implies that Sommerfeld's first and second formulation of the radiation condition are
equivalent. Namely, energy that is pumped in at $j=0$ must be
radiated away, and hence the radiation fluxes must point away from $j=0$ , see
Fig.~\ref{fig:cartoon_1}.  Conversely, energy that is deprived from the system at $\xi=0$
must be radiated in from $\pm\infty$. We thus conclude that both~\eqref{eq:SOM1} and~\eqref{eq:SOM2} select the solution $x^+$ but reject $x^-$.

We close this section by mentioning that the \emph{causality principle}, see~\cite{Slepyan:01a,Slepyan:02a} and the discussion in \sect\ref{sec:Radi-cond-trav}, requires
that all oscillatory modes in front and behind the interface satisfy $\grspeed > \speed$ and  $\grspeed < \speed$, respectively. It is not hard to see that this principle also favours $x^+$ and, more generally, implies~\eqref{eq:SOM2} for a standing source.
%
%
\section{Phase transition waves for a bi-quadratic NN chain}
\label{sec:Radi-cond-trav}

We now consider travelling waves with a moving inhomogeneity and apply the selection
criteria~\eqref{eq:SOM1} and~\eqref{eq:SOM2} to these waves. We mention, however, two important caveats of our
analysis: firstly, we assume the existence of subsonic travelling waves with a single
inhomogeneity (interface). Guidance for the existence can be taken
from~\cite{Truskinovsky:05a}; yet existence is a subtle issue, and a rigorous existence proof is available only for a small regime of
subsonic velocities~\cite{Schwetlick:07a,Schwetlick:08a};  it is also worth mentioning that there is a velocity regime where no travelling waves
with a single interface can exist~\cite{Schwetlick:10a}. Secondly, the selection criteria
that result from the thermodynamic interpretation of~\eqref{eq:SOM1} and~\eqref{eq:SOM2} are \emph{necessary} but not sufficient.
\par
\begin{figure}[t!]
  \centering{%
    \includegraphics[width=0.35\textwidth]%
    {\figfile{front_diagramm}}}%
  \caption{Phase transition waves with $\abs{\speed}>c_2$ have periodic     tails and come in
two different types. Type-I waves have either     $0<\grspeed<\speed$ or $0>\grspeed>\speed$,
whereas type-II waves     correspond to either $\grspeed<0<\speed$ or $\grspeed>0>\speed$.   }%
  \label{fig:types}
\end{figure}
We now study heteroclinic solutions to the travelling wave equation~\eqref{TWEqn}. To calculate
the thermodynamic fluxes explicitly, we restrict our considerations to the NN chain with piecewise
quadratic interaction potential
\begin{align}
  \label{TDMF:Potential}
  \Phi\at{r}=\tfrac{1}{2}\min\left\{\at{r-1}^2,\,\at{r+1}^2\right\},
\end{align}
but mention that our thermodynamic arguments can, at least in principle, be generalised to genuinely
nonlinear potentials as well. (The double-well nature of $\Phi$ describes the co-existence of
different stable states and thus the possibility of interfaces between those states.)

The potential~\eqref{TDMF:Potential} is normalised to have unit sound speed, $c_0=1$. As
illustrated in Figure~\ref{fig:types}, there is a critical velocity $c_2>0$ such that for all
with $\speed$ with $c_2<\abs{\speed}<1$ there is a unique solution $\kappa>0$ to
\begin{align}
  \label{TDMF:DispRel}
  \speed^2 k^2=\Om^2\at{k},\quad
  \Om\at{k}=2\sin\at{k/2}.
\end{align}
From now on we solely consider waves with $\abs{\speed}>c_2$ because then the tails are
periodic with a unique wave number $\kappa$ as chosen above. This means, for any travelling wave with a single interface we have
\begin{align}
  \label{TDMF:Convergence}
  \pair{\calR}{\calV}\at\phase\xrightarrow{\phase\to\pm\infty}
  \pair{\calR_{\pm\infty}}{\calV_{\pm\infty}}\at\phase,
\end{align}
where both $\pair{\calR_{+\infty}}{\calV_{+\infty}}$ and
$\pair{\calR_{-\infty}}{\calV_{-\infty}}$ are periodic travelling waves (possibly constant) with phase speed
$\speed$ and group speed $\grspeed$. To compute the thermodynamic fields explicitly, it is
necessary that the asymptotic microscopic strains are confined to the harmonic wells. We thus
require that both $\calR_{-\infty}$ and $\calR_{+\infty}$ have a definite sign. By symmetry
we can assume that $\calR_{\pm\infty}\at\phase\gtrless0$, and by shift invariance we can also
assume that $\calR\at{0}=0$. Thus, the interface moves along $j=\speed{t}$ and
$\xi=\speed\tau$ in the microscopic and macroscopic space-time coordinates, respectively.
Notice, however, that we have not fixed the sign of $\speed$, so the wave travels from
negative strain to positive strain for $\speed>0$, and the other way around for $\speed<0$.

\myparagraph{Macroscopic constraints for phase transition waves}

Under the assumption that travelling waves with a single interface 
as described above exist,
all thermodynamic fields
are constant on the left and on the right of the interface and are completely determined by
the periodic tail oscillations in~\eqref{TDMF:Convergence}. The macroscopic conservation laws
therefore reduce to jump conditions via $\partial_\tau\rightsquigarrow-\speed\jump{\,}$ and
$\partial_\xi\rightsquigarrow\jump{\,}$,
so the PDEs~\eqref{Eqn:Macro.Cons.Laws} transform into
\begin{align}
  \label{TDMF:JumpConditions}
  \speed\jump{R}=-\jump{V}
  ,\quad%
  \speed\jump{V}=\jump{P}
  ,\quad%
  \speed\jump{E_\non+E_\osc}=\jump{PV+Q}.
\end{align}
Note that the asymptotic jump and mean value of any
thermodynamic field $X$ are given by
\begin{align*}
  \jump{X}=X_{+\infty}-X_{-\infty},\quad
  \mean{X}=\tfrac{1}{2}\at{X_{+\infty}+X_{-\infty}},\quad
X_{\pm\infty}=\lim_{\xi\to\pm\infty}{X}\pair{\tau}{\xi}.
\end{align*}
We now express the  asymptotic values of all thermodynamic fields $X$ in terms of $R_{\pm\infty}$,
$V_{\pm\infty}$, $A_{\pm\infty}$, and the speeds $\speed$ and $\grspeed$. In this way, we
recover well-known jump conditions and kinetic relations for phase transition
waves~\cite{Truskinovsky:82a,Truskinovsky:93a}. 
The strategy of computing thermodynamic quantities as 
averages of atomic observables is well established, see for
instance~\cite{Truskinovsky:05a,Slepyan:05a}. However, it is usually not based on
Young measures and hyperbolic scaling limits. It further seems that the concepts of
oscillatory and non-oscillatory energy have not been used before in the context of phase transition waves.
\par%
Due to \eqref{TDMF:Potential} and the sign choice for $\calR_{\pm\infty}$, we have $P_{\pm\infty}=-\Phi^\prime\at{R_{\pm\infty}}=-{R}_{\pm\infty}\pm1$
and hence $\jump{P}=2-\jump{R}$. The jump conditions for
mass~\eqref{TDMF:JumpConditions}$_1$ and momentum~\eqref{TDMF:JumpConditions}$_2$ thus imply
\begin{align}
  \label{TDMF:MassMom}
  \jump{R}=\frac{2}{1-\speed^2}, \qquad
  \jump{V}=-\frac{2\speed}{1-\speed^2},
\end{align}
and therefore \begin{align*}
  \jump{E_\non} =
  2\frac{\speed^2\mean{R}-\speed\mean{V}}{1-\speed^2}
  ,\qquad%
  \jump{PV}=2\frac{\speed\mean{R}-\speed^2\mean{V}}{1-\speed^2}.
\end{align*}
Using this and the formulae for $E_\osc$ and $Q$ from~\eqref{TDTW:OscDensAndFlux}, we then find
\begin{align}
  \label{TDMF:Energy.NonOsc}
  -\speed\jump{E_\non}+\jump{PV}=2\speed\mean{R},\quad
  -\speed\jump{E_\osc}+\jump{Q}=\at{\grspeed-\speed}\jump{E_\osc},
\end{align}
which is the analogue to~\eqref{eq:production.e1}. Consequently, the jump condition for the
total energy~\eqref{TDMF:JumpConditions}$_3$ enforces that the productions for oscillatory
and non-oscillatory energy cancel via
\begin{align}
  \label{TDMF:OscEnergyProd}
  \Xi=-2\speed\mean{R}=\at{\grspeed-\speed}\jump{E_\osc}.
\end{align}
This formula is important as it reveals that for phase transition waves there is no
production of total energy but instead a steady transfer between the oscillatory and the
non-oscillatory contributions of the energy. This transfer has power $-2\mean{R}$ and drives
the wave. More precisely, the \emph{configurational force} $\Upsilon$ satisfies
\begin{align}
  \notag
  \speed{\Upsilon}=\Xi,\qquad
  \Upsilon:=\jump{\Phi\at{R}}-\mean{\Phi^\prime\at{R}}\jump{R}.
\end{align}
This is the~\emph{kinetic relation} and follows from~\eqref{TDMF:MassMom}
and~\eqref{TDMF:Energy.NonOsc} thanks to $\jump{E_\non}=\jump{V}\mean{V}+\jump{\Phi\at{R}}$,
$\jump{PV}=\mean{P}\jump{V}+\jump{P}\mean{V}$ and $P=-\Phi^\prime\at{R}$. The production of
oscillatory energy $\Xi$ is the process commonly called dissipation. Recall, however, that the total energy $E=E_\osc+E_\non$ is not dissipated but conserved according to the energy laws~\eqref{Eqn:Macro.Cons.Laws}$_3$ and~\eqref{TDMF:JumpConditions}$_3$. 

We finally notice that time reversal changes the sign of $\speed$, $\grspeed$, $\Xi$,
$V_{\pm\infty}$, $Q_{\pm\infty}$, but does not affect $R_{\pm\infty}$, $A_{\pm\infty}$,
$E_{\osc,\,\pm\infty}$, $E_{\non,\,\pm\infty}$, $P_{\pm\infty}$, or $\Upsilon$.  We also
observe that all thermodynamic fields are completely determined by
\begin{align}
\label{TDMF:Prms}%
  \speed,\quad
  A_{-\infty}
  ,\quad
  A_{+\infty},\quad
  \mean{V} .
\end{align}
In fact, from~\eqref{TDMF:Prms} we compute $ \kappa$ and $\grspeed$ by~\eqref{TDMF:DispRel}
and set $\jump{E_\osc}=\tfrac{1}{2}\jump{A^2}$. Afterwards we solve~\eqref{TDMF:MassMom}$_1$
and~\eqref{TDMF:OscEnergyProd} for $R_{-\infty}$ and $R_{-\infty}$, which then allow us to
compute $V_{-\infty}$ and $V_{+\infty}$ from~\eqref{TDMF:MassMom}$_2$.

The jump conditions derived in this section constitute \emph{macroscopic constraints} which are \emph{necessary}
for the existence of a phase transition wave with speed $\abs{\speed}\in\oointerval{c_2}{1}$.
However, it was proven in~\cite{Schwetlick:08a} that these conditions are also
\emph{sufficient}, at least for near sonic speeds with $0<1-\abs{\speed}\ll1$. In conclusion, there exists a four-parameter family of candidates for phase transition waves, that means of heteroclinic travelling waves. It is now very natural to ask which of them are physically reasonable, and so selection criteria come into
play.

\myparagraph{The macroscopic aspects of Sommerfeld's radiation conditions}

We first consider~\eqref{eq:SOM1}. It is reasonable to require that the interface is a source
rather than a sink of oscillatory energy. The production $\Xi$ therefore has to be non-negative, which means
\begin{align}
  \label{TDMF:Radiation.Condition1}
  \Xi = (\grspeed-\speed)\jump{E_\osc}\geq0.
\end{align}
This inequality is \emph{equivalent} to
\begin{align}
  \label{TDMF:EntropyCondition}
  \speed\Upsilon\geq0,
\end{align}
which is the usual \emph{entropy condition} for phase transition waves (see for
example~\cite{Truskinovsky:05a}).  For all waves considered here,
\eqref{TDMF:Radiation.Condition1} implies $E_{\osc,\,+\infty}<E_{\osc,\,-\infty}$
for waves moving to the right and $E_{\osc,\,+\infty}>E_{\osc,\,-\infty}$ for left-moving
waves.  In other words, in all cases we have $\Xi>0$ if and only if the oscillations have smaller amplitude
in front of the interface than behind the interface. \eqref{eq:SOM1} select these solutions
but rejects waves that travel from regions of low oscillations into regions of high
oscillations. Note, however, that oscillations in front of the interface are not ruled out
since it is only required that the wave propagates in direction of decreasing oscillations.
This implies that there is still a four-parameter family of phase transition waves which
satisfy~\eqref{eq:SOM1}.

\begin{figure}[t!]
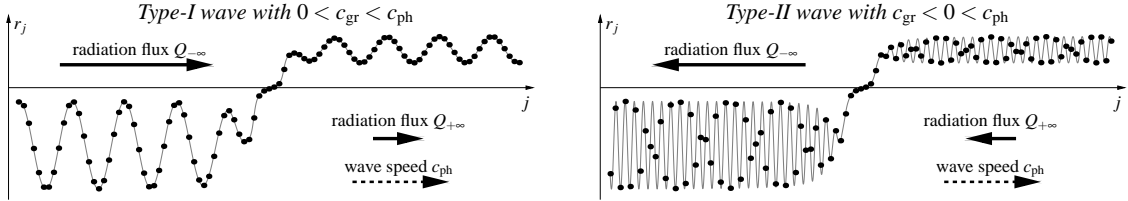

  \centering{\begin{tabular}{c}%
    \includegraphics[width=0.45\textwidth]%
    {\figfile{front_type_I}}%
    \hspace{.05\textwidth}%
    \includegraphics[width=0.45\textwidth]%
    {\figfile{front_type_II}}%
  \end{tabular}}%
\caption{%
Phase transition waves are driven by a constant transfer between the oscillatory and the
non-oscillatory energy and the radiation fluxes on both sides of the interface have the same sign. The cartoons illustrate the
\emph{source} solutions for type-I and type-II waves, which represent different
order relation for the group velocity $\grspeed$ and the phase velocity $\speed$. If time is
reversed, the interface becomes a sink of oscillatory energy and the radiation fluxes on both
sides change their sign. } \label{fig:cartoon_2}
\end{figure}

\begin{figure}[t!]
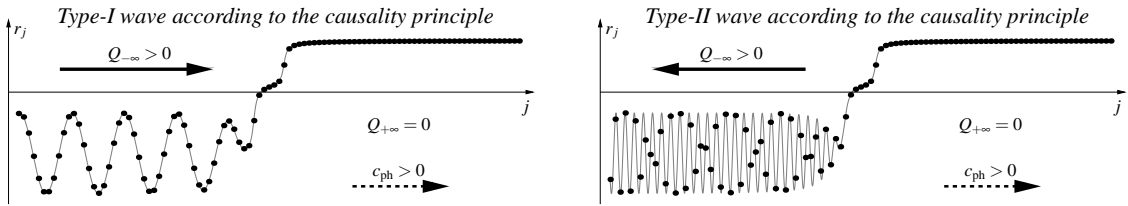

  \centering{\begin{tabular}{c}%
    \includegraphics[width=0.45\textwidth]%
    {\figfile{front_cold_I}}%
    \hspace{.05\textwidth}%
    \includegraphics[width=0.45\textwidth]%
    {\figfile{front_cold_II}}%
\end{tabular}}%
\caption{The causality principle selects phase  transition waves that propagate
into a region without oscillations. These waves are, up to Galilean invariance, uniquely determined by their speed $\speed$. }
  \label{fig:cartoon_3}
\end{figure}

Sommerfeld's second formulation~\eqref{eq:SOM2}, which stipulates that ``energy is carried
away from the interface'', translates directly into a condition on the radiation flux. It
requires, on both sides of the interface, that $Q$ points away from the interface. This
condition is very restrictive for phase transition waves with periodic tails because both
$Q_{-\infty}$ and $Q_{+\infty}$ have the same sign as the group velocity $\grspeed$,
see~\eqref{TDTW:OscDensAndFlux}. Thus~\eqref{eq:SOM2} can only be satisfied if there are no
oscillations on one side of the interface. However, which side of the interface is chosen by ~\eqref{eq:SOM2} depends on the sign of
$\grspeed$, and therefore we distinguish between two types, see Figures~\ref{fig:types}
and~\ref{fig:cartoon_2}. Type-I waves have $c_1<\abs{\speed}<1$, where
$c_1:=2\Omega\at{\pi/2}/\pi$, and this implies $\sgn\grspeed=\sgn\speed$ and $\abs{\grspeed}<\abs{\speed}$. Type-II waves
correspond to $c_2<\abs{\speed}<c_1$, which means $\sgn{\grspeed}\neq\sgn{\speed}$.
\par%
For type-I waves, the radiation fluxes behind and in front of the interface point towards and
away from the interface, respectively. This is illustrated in Figure~\ref{fig:cartoon_2}, and
holds regardless whether~\eqref{TDMF:Radiation.Condition1} is satisfied or not. Energy is
therefore always radiated towards the interface, and the second formulation of the radiation
condition can only be satisfied if there are no oscillations behind the interface. Those
waves, however, are usually regarded as unphysical as they
violate \eqref{TDMF:Radiation.Condition1} and~\eqref{TDMF:EntropyCondition}. The only solution candidates that would be accepted by
both formulations have no oscillations, neither in front nor behind the interface; however,
such single transition waves do not exist for the potential~\eqref{TDMF:Potential}.
\par%
The discussion is different for type-II waves. There is still radiation into the interface
but now the radiation flux impinges from ahead of the interface. Therefore, both~\eqref{eq:SOM1}
and~\eqref{eq:SOM2} are simultaneously  satisfied by type-II waves that propagate into a region
without oscillations, that is, $E_{\osc,\,+\infty}=0$ for right-moving waves and 
$E_{\osc,\,-\infty}=0$ for left-moving waves.  In particular,  there exists a
 three-parameter family of type-II waves that satisfy both~\eqref{eq:SOM1} and~\eqref{eq:SOM2}.  

In summary, for type-I waves, and hence for near sonic waves, \eqref{eq:SOM1} and~\eqref{eq:SOM2} contradict each other and would, if applied together, reject any bounded phase transition wave with a single interface. For type-II waves, however, 
\eqref{eq:SOM2} implies \eqref{eq:SOM1} and allows for a three-parameter family of waves. This is in stark contrast to the case of a standing source discussed in \sect\ref{sec:Stat-Somm-probl}, where both criteria are equivalent. 

\myparagraph{Microscopic selection criteria}

Besides macroscopic criteria as described above, there also exist microscopic selection
principles for phase transition waves. These are far more restrictive and select a
two-parameter family of phase transition waves, as shown below. For the sake of comparison we now summarise
the main arguments leading to microscopic selection criteria for phase transition waves in
bi-quadratic NN chains and refer to \cite{Truskinovsky:05a,Cherkaev:05a} for more details. The
key idea is that under the condition $\sgn{\calR\at{\phase}}=\sgn{\phase}$, each phase
transition wave is determined by the affine advance-delay-differential equation
\begin{align*}
  \speed^2\partial_\phase^2\calR
  =\triangle\calR-\triangle\sgn{\phase}.
\end{align*}
This equation can be regarded as the analogue to the inhomogeneous Helmholtz
equation~\eqref{Eqn:NN.Inhom.Helmholtz}, and solutions can be represented by
$\calR\at\phase=\mathrm{Re}\at{\calS\at\phase}$ with
\begin{align}
  \label{PTW:ContourIntegral}
  \calS\at\phase=-\frac{\iu}{\pi}\int\limits_\Gamma
  \frac{\Omega\at{k}^2\mhexp{+\iu{k}\phase}}{k\Omega^2\at{k}
    -\speed^2{k}^3}\, \dint{k},
\end{align}
where $\Gamma$ is an appropriately chosen contour in the complex plane. The microscopic
selection criterion is based on the \emph{causality principle}~\cite{Slepyan:01a,Slepyan:02a} and requires
that all oscillatory modes in front and behind the interface satisfy $\grspeed > \speed$ and  $\grspeed < \speed$, respectively. The contour
$\Gamma$ is therefore chosen  as
the dented real axis that passes the origin $k=0$ from below but the other real-valued poles
of the integrand in~\eqref{PTW:ContourIntegral} from above.  Jordan's 
lemma from complex-valued calculus then provides the following expressions for the thermodynamic 
fields for a right moving wave
\begin{align}
\label{CausPrinciple1}
  R_{\pm\infty}=\pm\frac{\speed}{1-\speed^2}+\frac{\speed}{\grspeed-\speed},
  \qquad
  A_{-\infty}=\frac{2\speed}{\grspeed-\speed},\quad
  A_{+\infty}=0,
\end{align}
with $\jump{V}$ as in \eqref{TDMF:MassMom}, and therefore
\begin{align}
\label{CausPrinciple2}
  \jump{E_\osc}=-\frac{2\speed^2}{\at{\grspeed-\speed}^2}<0,\qquad
  \Xi=\frac{2\speed^2}{\speed-\grspeed } >0.
\end{align}
In particular, there exists a two-dimensional family of phase transition waves that is
parameterised by the speed $\speed$ and the trivial parameter $\mean{V}$.  All these causality waves
have no oscillations ahead of the interface and~\eqref{eq:SOM1}
is always satisfied. The validity of~\eqref{eq:SOM2}, however, depends on 
$\sgn\grspeed=\sgn{Q}$, i.e., on whether the wave is of type-I or type-II.  This is illustrated by the two cartoons from Figure~\ref{fig:cartoon_3}.  We therefore find again that~\eqref{eq:SOM2} has different implications for the two wave types. This is not surprising since  a condition on the sign of $\speed-\grspeed$ does not imply any constraint for $\sgn{\grspeed}=\sgn{Q}$, or~\emph{vice versa}. In other words, \eqref{eq:SOM2} does not imply the causality principle, or \emph{vice versa}. For a type-I wave the causality principle even implies that energy is radiated towards the interface from behind, see the left panel in Figure~\ref{fig:cartoon_3}.
\bigpar
We recall that  for standing sources as discussed in~\sect\ref{sec:Stat-Somm-probl}, the causality principle in fact implies~\eqref{eq:SOM2}. It is therefore tempting to adopt the criterion~\eqref{eq:SOM2} for moving
phase interfaces and reformulate this condition as ``the energy flux has to point away from the interface, with reference to an observer travelling with the interface''. Let us follow this argument for causality waves as shown in Figure~\ref{fig:cartoon_3}. Changing to the co-moving frame via $\tilde j = j - \speed t$ and $\tilde{\xi}=\xi-\speed\tau$, the
partial energy balances~\eqref{eq:production.e1} transform into
\begin{align*}
  \partial_\tau{E_\osc}+\partial_{\tilde\xi}\at{Q-\speed{E}_\osc}=+\Xi,\qquad
\partial_\tau{E_\non}+\partial_{\tilde\xi}\at{PV-\speed{E}_\non}=-\Xi.
\end{align*}
In view of $Q=\grspeed{E_\osc}$ we now conclude that
the causality principle implies that the \emph{relative radiation flux} $\tilde{Q}=Q-\speed{E_\osc}$ has the same sign as $\tilde\xi$, and is hence indeed pointing away from the interface on both sides. The flaw with this argument, however,
is that the
passage to the co-moving frame is not a Galilei transformation since both $j$ and $\xi$ are Lagrangian space coordinates (Galilei transformations
are given by $u_j\mapsto u_j+ct$). The obervation that $\tilde{Q}$ has a certain sign on both sides of the interface does therefore not imply that the interface radiates energy towards infinity.  The correct treatment is to work with the Lagrangian space coordinates $j$ and $\xi$ and to characterise the radiative parts of the energy flux in terms of $Q$. This flux can, as shown in Figure~\ref{fig:cartoon_3}, point towards the interface from 
the tail of the wave.

\myparagraph{Selection rules from Riemann problems}

A different approach to microscopic selection criteria is via initial values problems. The main idea
is to characterise the physically relevant solutions as 
the limit as
$t\to\infty$ of  reasonable initial data. For the NN chains with double well potential, a rigorous mathematical analysis of initial values problems is not yet available, but numerical simulations provide a lot of insight into the implied selection criteria. 

To illustrate this, we now present two numerical solutions of initial value problems for the NN chain with interaction potential~\eqref{TDMF:Potential}. Due to the Hamiltonian nature of~\eqref{Eqn:ForcedNN}, we perform the simulations with the Verlet scheme (see, e.g.,~\cite{HLW02} for details), which is a symplectic integrator and does not add numerical viscosity to the problem. In order to enforce the formation of macroscopic waves, we start with Riemann initial data, that means we set

\begin{align*}
r_j\at{0}=r_{\pm\infty},\qquad
v_j\at{0}=v_{\pm\infty}\qquad \text{for}\qquad j\gtrless0,
\end{align*}

with prescribed asymptotic data $r_{\pm\infty}$, $v_{\pm\infty}$. Moreover, we choose a `number of particles' $N\gg1$ and introduce the scaling parameter $\eps=1/N$. We then integrate the lattice equation on the domain $\abs{j}\leq{N/2}$ over the time interval $0\leq\bar{t}_\fin{N}$ with $0<\bar{t}_\fin<1/2$. To this end we close the discrete equations~\eqref{eq:DCL.MassMom} by 
imposing Dirichlet boundary data for $r$ and $v$. Notice that the choice of the computational domain and time is in accordance with both the hyperbolic scaling and $c_0=1$, this means no macroscopic wave can hit the boundary of the computational domain.
\par%
Figures~\ref{fig:rp_1} and~\ref{fig:rp_2} depict the numerical results at time $\bar{t}_\fin$ for two Riemann problems with $\sgn\at{r_{-\infty}}\neq\sgn\at{r_{+\infty}}$. The left panel contains
snapshots of the atomic distances against the scaled particle index $\bar{j}=\eps{j}$,
and the other panels show
the spatial profiles of macroscopic strain $R$ and heat flux $Q$, which are computed by mesoscopic averaging; see~\cite{DH08} for details.
\bigpar%
\begin{figure}[ht!]
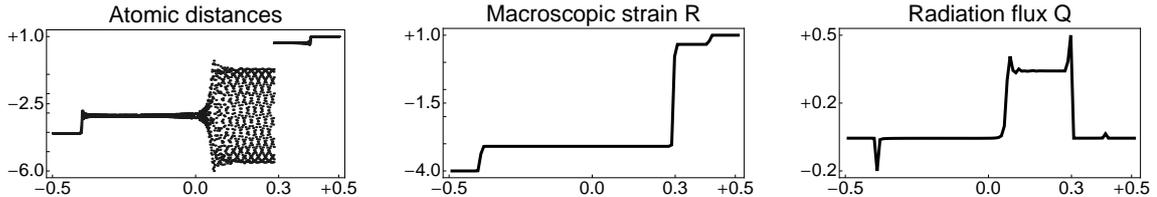

  \centering{
    \includegraphics[width=0.3\textwidth]%
    {\figfile{rp1_atom_dist}}\hspace{0.01\textwidth}%
     \hspace{0.025\textwidth}%
    \includegraphics[width=0.3\textwidth]%
    {\figfile{rp1_fld_dist}}\hspace{0.01\textwidth}%
     \hspace{0.025\textwidth}%
    \includegraphics[width=0.3\textwidth]%
    {\figfile{rp1_fld_heat_flux}}\hspace{0.01\textwidth}%
}%
\caption{Numerical solutions to the Riemann problem with
$r_{-\infty}=-4$, $r_{+\infty}=+1$,  $v_{-\infty}=+1$, $v_{+\infty}=-1$ and $N=4000$, $\bar{t}_\fin=0.4$.
The microscopic dynamics selects a type-I wave satisfying the causality principle.}
 \label{fig:rp_1}
\end{figure}
\begin{figure}[ht!]
  \centering{%
    \includegraphics[width=0.3\textwidth]%
    {\figfile{rp2_atom_dist}}\hspace{0.01\textwidth}%
    \hspace{0.025\textwidth}%
    \includegraphics[width=0.3\textwidth]%
    {\figfile{rp2_fld_dist}}\hspace{0.01\textwidth}%
     \hspace{0.025\textwidth}%
    \includegraphics[width=0.3\textwidth]%
    {\figfile{rp2_fld_heat_flux}}\hspace{0.01\textwidth}%
}%
\caption{Numerical solutions the Riemann problem with
$r_{-\infty}=-3/2$, $r_{+\infty}=+1$,  $v_{-\infty}=+1$, $v_{+\infty}=-1$ and $N=4000$, $\bar{t}_\fin=0.4$. The microscopic dynamics selects a type-II wave satisfying the causality principle.}
  \label{fig:rp_2}
\end{figure}
In both simulations we observe four elementary waves connecting five constant states. Starting from the left side we find:  Left initial state, first wave, first non-oscillatory intermediate state, second wave, oscillatory intermediate state, third wave, second non-oscillatory intermediate state, fourth wave, and the right initial state. The first and the fourth wave, which both connect two non-oscillatory states, are contact discontinuities and correspond to jumps in the linear wave equation. The second wave, which connects to the oscillatory intermediate state, is still a harmonic phenomenon since the oscillations are confined to one of the quadratic wells of $\Phi$. We also notice that the macroscopic strain does not change within the second wave.
\par%
The third wave, however, describes a phase transition as it connects negative strain to positive strain. A closer look to the oscillatory intermediate state reveals that it is generated by a single harmonic travelling wave which has a certain phase velocity $\speed$. The third wave propagates with the same speed $\speed$, and hence it corresponds in fact to a heteroclinic travelling wave in the lattice. We finally mention that for $\eps>0$ we still have harmonic fluctuations within the constant intermediate state and near the contact discontinuities, but these disappear in the thermodynamic limit $\eps\to0$. 
\par%
Figures~\ref{fig:rp_1} and~\ref{fig:rp_2} reveal that the radiation flux $Q$ within the oscillatory intermediate state has a different sign in both simulations. More precisely, we see  a type-I wave in  Figure~\ref{fig:rp_1} but a  type-II wave in Figure~\ref{fig:rp_2}. Recall that both waves are now `chosen' by the microscopic dynamics (more precisely, by the numerical scheme). In other words, the
macroscopic structure of the Riemann solution (the parameters of the waves and  
the intermediate states) determine --- implicitly,  but nevertheless uniquely --- a microscopic selection rule for phase transition waves. We are not able to 
derive this selection rule rigorously from the lattice dynamics although this should, in principle, be possible. However, 
computing the speed of the interface as well as thermodynamic fields
on both sides of the interface, we find that these meet very well the predictions from the causality principle described above, see~\eqref{CausPrinciple1} and~\eqref{CausPrinciple2}. It would be highly desirable to
understand this in greater detail and to find a reformulation of the causality principle that can also be applied to generic double well potentials, where Fourier methods can no longer be used. First steps into this direction are already done in this paper,
because our thermodynamic approach to radiation conditions can be generalised
to more complicate nonlinear problems. However, we emphasise that many questions remain open.

\myparagraph{Discussion}

The discussion in this section shows that the case of a moving interface is different from that of a standing source; for the latter,~\eqref{eq:SOM1} and~\eqref{eq:SOM2} are equivalent. For moving interfaces in phase transition waves we find that~\eqref{eq:SOM1}, which is 
formulated in terms of sources and
sinks, is equivalent to the entropy condition~\eqref{TDMF:EntropyCondition}.
The flux condition~\eqref{eq:SOM2}, however, implies~\eqref{eq:SOM1} for type-II waves but both rules contradict each other for type-I waves.
\par
For moving interfaces, one has therefore to
distinguish between arguments that rely on the energy transport in terms of fluxes, as~\eqref{eq:SOM2}, and
arguments based on energy productions, as~\eqref{eq:SOM1}. 
Since~\eqref{eq:SOM1} is equivalent to the entropy inequality, we propose to apply this selection criterion for moving inhomogeneities. We also propose that 
flux based criteria such as~\eqref{eq:SOM2} should \emph{not} be na\"\i vely applied to problems with moving inhomogeneities. 
\par
These suggestions are in line with the causality 
principle used in~\cite{Slepyan:01a,Slepyan:02a,Truskinovsky:05a} and numerical simulations of  
Riemann-type problems for NN chains.

\appendix
\section{Appendix: Macroscopic conservation laws for NN chains}
\label{app:1}
We establish the thermodynamic limit for the forced NN chain ~\eqref{Eqn:ForcedNN} provided that the forcing $\zeta$ satisfies~\eqref{App:Forcing}. Our goal is to show
that the macroscopic balance laws for mass, momentum and energy (see~\eqref{eq:NN.Limit}
and~\eqref{Eqn.EnergyBalanceWithForcing}) can be derived rigorously as follows:
\begin{enumerate}
\item
The hyperbolic scaling transforms each \emph{bounded} solution into a family of
\emph{oscillatory} functions which depend   on the macroscopic time $\tau$ and
macroscopic Lagrangian space   coordinate $\xi$.
\item
This family of functions is compact in the sense of Young   measures, and hence we can
extract convergent subsequences. Along such a subsequence, the limit measure encodes the
\emph{local distribution functions} of the oscillatory data and hence the \emph{local
mean values} of atomic observables. These local mean values provide the
\emph{thermodynamic fields} and are, by construction, \emph{non-oscillatory} functions in
$\tau$ and $\xi$.
\item
The dynamics of NN chains implies that the thermodynamic fields of each Young measure limit
satisfy the macroscopic conservation laws of mass, momentum and energy (in a distributional sense).
\end{enumerate}
We now collect the mathematical tools for each of these steps. We start with some basic facts
about Young measures and refer the reader
to~\cite{Ball:89a,Roubicek:97a,Valadier:94a,Taylor:97c} for more details.

Let $\Omega$ be a domain in $\Rset^k$ and $K$ be some convex and closed set in $\Rset^m$. A
Young measure $\mu\in{Y}\at{\Omega;K}$ is a $\Omega$-family of probability measures on $K$,
that means a measurable map $\mu\colon
y\in\Omega\to\mu\pair{y}{\dint{Q}}\in\fspace{Prob}\at{K}$. Notice that each function
$Q\colon\Omega\to{K}$ defines a trivial Young measure with
$\mu\pair{y}{\dint{Q}}=\delta_{Q\at{y}}\at{\dint{Q}}$, where $\delta_{Q\at{y}}\at{\dint{Q}}$
abbreviates the delta distribution in $Q\at{y}$.

\begin{theorem}[Fundamental Theorem on Young Measures]
\label{App:Theo1} %
Each family
\begin{align*}
\at{Q_\eps}_{0<\eps\leq{1}}\subset{}\fspace{L}^\infty\at{\Omega;K}
\end{align*}
is compact in the space of Young-measures   $\fspace{YM}\at{\Omega;K}$.  This means there
exists a sequence $(\eps_n)_{n \in \N}$ with $\eps_n\to0$ along with a limit measure
$\mu\in\fspace{YM}\at{\Omega;K}$ such that
  \begin{align}
    \label{App:Theo1.Eqn1}
    \psi\at{Q_{\eps_n}}\xrightarrow{n\to\infty}\average{\psi}%
    \quad\text{weakly$\star$ in
      $\fspace{L}^\infty\at{\Omega}$}
  \end{align}
  for all observables $\psi\in\fspace{C}\at{{K}}$, where
  \begin{align*}
    \average{\psi}\at{y}=
    \int\limits_{B}\psi\at{Q}\mu\pair{y}{\dint{Q}}
  \end{align*}
  gives the local mean value of $\psi$ in $y\in\Omega$.
\end{theorem}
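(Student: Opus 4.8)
The plan is to realise Young measures as elements of the dual of a separable Banach space and to extract a convergent subsequence by Banach--Alaoglu. Since the solutions are bounded, the relevant target set may be taken compact, so I shall assume throughout that $K$ is compact; then $\fspace{C}\at{K}$ is a separable Banach space. First I would introduce the \emph{predual} $\fspace{L}^1\at{\Omega;\fspace{C}\at{K}}$ of Carath\'eodory integrands $f\pair{y}{Q}$, continuous in $Q$ and integrable in $y$, under the norm $\int_\Omega\sup_{Q\in{K}}\abs{f\pair{y}{Q}}\,\dint{y}$; separability of this space follows from separability of $\fspace{C}\at{K}$ together with that of $\fspace{L}^1\at{\Omega}$.

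Next I would invoke the representation of its dual as the space of weakly-$\star$ measurable, essentially bounded maps $y\mapsto\nu_y$ into the Radon measures $M\at{K}$, with pairing $\langle{f},\nu\rangle=\int_\Omega\int_K f\pair{y}{Q}\,\nu_y\at{\dint{Q}}\,\dint{y}$. Each element $Q_\eps$ then embeds as the (trivial) family $\mu_\eps\at{y}=\delta_{Q_\eps\at{y}}$ of probability measures, which lies in a bounded subset of this dual. By Banach--Alaoglu the closed unit ball of the dual is weak-$\star$ compact, and, because the predual is separable, this topology is metrizable on bounded sets; hence I can extract a sequence $\eps_n\to0$ with $\mu_{\eps_n}\rightharpoonup\mu$ weakly-$\star$ to some limit $\mu$ in the dual.

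It then remains to check that $\mu$ is a genuine Young measure and to read off the stated convergence. Testing against non-negative integrands yields $\mu_y\geq0$, while testing against integrands constant in $Q$, that is $f\pair{y}{Q}=g\at{y}$, together with the normalisation $\delta_{Q_\eps\at{y}}\at{K}=1$, forces $\mu_y\at{K}=1$ for almost every $y$, so each $\mu_y$ is a probability measure. Finally, applying the weak-$\star$ convergence to the product integrand $f\pair{y}{Q}=g\at{y}\psi\at{Q}$ for fixed $\psi\in\fspace{C}\at{K}$ and $g\in\fspace{L}^1\at{\Omega}$ gives exactly $\psi\at{Q_{\eps_n}}\rightharpoonup\average{\psi}$ weakly-$\star$ in $\fspace{L}^\infty\at{\Omega}$, with $\average{\psi}\at{y}=\int_K\psi\at{Q}\mu\pair{y}{\dint{Q}}$. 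The main obstacle is the duality step: identifying the dual of the Carath\'eodory space with the measure-valued maps requires a measurable-selection (lifting) argument. Were $K$ unbounded one would additionally need a tightness, Prokhorov-type, argument to exclude escape of mass and thereby preserve the normalisation $\mu_y\at{K}=1$; under the present boundedness assumption this is automatic, so the only delicate point is the verification of the total mass.
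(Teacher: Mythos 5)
Your proposal is correct, but note that the paper itself offers no argument for this theorem: its ``proof'' is the single line ``See, for instance, \cite{Taylor:97c}, Proposition 11.3 in Section 13.11.'' What you have written out is, in substance, the classical proof underlying that citation (and Ball's version \cite{Ball:89a}): realise parametrised measures inside the dual of the separable Banach space $\fspace{L}^1\at{\Omega;\fspace{C}\at{K}}$, extract a weak-$\star$ convergent sequence via Banach--Alaoglu together with metrizability of the weak-$\star$ topology on bounded sets, and then verify positivity and total mass of the limit by testing with integrands $f\pair{y}{Q}\geq0$ and $f\pair{y}{Q}=g\at{y}$, $g\in\fspace{L}^1\at{\Omega}$ --- the latter test working even though $\Omega$ has infinite measure in the paper's application, since $g$ is integrable. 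So your route does not diverge from the one the authors implicitly rely on; what it buys is a self-contained argument in place of a literature pointer, with the genuinely nontrivial ingredient correctly isolated, namely the representation $\fspace{L}^1\at{\Omega;\fspace{C}\at{K}}^*\cong\fspace{L}^\infty_{w\star}\at{\Omega;M\at{K}}$, which you rightly flag as requiring a measurable-selection argument and which in practice one should simply cite. Two small points to tighten: first, your reduction to compact $K$ is exactly how the theorem is applied in the paper (there $K$ is a ball in $\Rset^2$), and it is genuinely needed, since for unbounded closed convex $K$ the statement as printed would fail without a tightness hypothesis --- your Prokhorov caveat is the right remark; second, to upgrade the tested identities to ``$\mu_y\geq0$ and $\mu_y\at{K}=1$ for almost every $y$'' you should run the tests over countable dense families of $\psi\in\fspace{C}\at{K}$ and $g\in\fspace{L}^1\at{\Omega}$ and collect the exceptional null sets. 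Incidentally, your closing formula $\average{\psi}\at{y}=\int_K\psi\at{Q}\,\mu\pair{y}{\dint{Q}}$ silently corrects the paper's typographical slip, where the domain of integration is printed as $B$ rather than $K$.
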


\begin{proof}
  See, for instance,~\cite{Taylor:97c}, Proposition 11.3 in Section   13.11.
\end{proof}

The convergence~\eqref{App:Theo1.Eqn1} is equivalent to
\begin{align}
\label{App:YMLimitFormula}
  \int\limits_{\Omega}\average{\psi}\at{y}\phi\at{y}{\,\dint{y}}=\lim_{
    n\to\infty}\int\limits_{\Omega}\psi\at{ Q_{\eps_n}\at{y} }
  \phi\at{y}{\,\dint{y}}
\end{align}
for all test functions $\phi\in\fspace{C}_{\comp}^\infty\at{\Omega}$.  Moreover, the
subsequence converges strongly to some limit function $Q_0$ in
$\fspace{L}^\infty\at{\Omega;K}$ if and only if the limit measure is trivial,
\begin{math}
  \mu\pair{y}{\dint{Q}}=\delta_{Q_0\at{y}}\at{\dint{Q}}.
\end{math}

We now suppose that we are given a bounded solution to~\eqref{Eqn:ForcedNN}. As in
\sect\ref{sec:Macr-field-equat}, we regard the atomic distances $r_j=u_{j+1}-u_j$ and
velocities $v_j=\dot{u}_j$ as the basic variables, i.e., we consider
\begin{align}
  \label{App:Micro.Sol}
  Q_j\at{t}=\pair{r_j\at{t}}{v_j\at{t}},\quad
  j\in\Zset,\quad{t\geq0}.
\end{align}
For a given scaling parameter $0<\eps\leq{1}$ we introduce $\tau$ and $\xi$
by~\eqref{eq:NN.Scaling}, so the macroscopic Lagrangian space-time coordinate is given by
\begin{align*}
  \Omega=\{\pair{\tau}{\xi}\;:\;\tau>0,\;\xi\in\Rset\}.
\end{align*}
Moreover, we identify~\eqref{App:Micro.Sol} with piecewise constant functions on $\Omega$,
\begin{align}
  \label{App:Identification}
  Q_\eps\pair{\eps{t}}{\eps{j}+\eta}=Q_{j}\at{t}\quad
  \text{ for every }
  t\geq0,\;{j}\in{\Zset},\;\abs{\eta}\leq\tfrac{1}{2}.
\end{align}
By assumption, we have $\at{Q_\eps}_{0<\eps\leq{1}}\subset{\fspace{L}^\infty\at{\Omega;K}}$
for some ball $K\subset\Rset^2$, and Theorem \ref{App:Theo1} provides at least one
subsequence that converges to some limit measure $\mu\in\fspace{YM}\at{\Omega;K}$. Moreover,
for each atomistic observable $\psi$ we can compute the corresponding thermodynamic field via
\begin{align*}
  \baverage{\psi\pair{r}{v}}\pair{\tau}{\xi}=
  \int_{\Rset^2}\psi\pair{r}{v}\mu\triple{\tau}{\xi}{\,\dint{r}\dint{v}}.
\end{align*}
We are now able to state and prove the main result on the thermodynamic limit of forced NN chains. It is a direct consequence of the discrete conservation laws derived from~\eqref{Eqn:ForcedNN}, the notion of Young-measure convergence, and the properties of distributional derivatives.

\begin{theorem}[Macroscopic conservation laws for NN chains]
\label{theo:macro-cons}
The thermodynamic fields of each Young measure limit $\mu$ of~\eqref{Eqn:ForcedNN} satisfy
the conservation laws~\eqref{eq:NN.Limit} within the following domains in the sense of
distributions: The laws for mass~\eqref{eq:NN.Limit}$_1$ and
momentum~\eqref{eq:NN.Limit}$_2$ hold for $\Omega$. The conservation of
energy~\eqref{eq:NN.Limit}$_3$ holds for $\Omega$ if $\zeta\equiv0$, and otherwise
for   $\tilde{\Omega}:=\Omega\setminus\{\xi=0\}$.
\end{theorem}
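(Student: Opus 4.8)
The plan is to obtain each macroscopic law in \eqref{eq:NN.Limit} as the $\eps\to0$ limit of the corresponding \emph{exact} discrete law from \sect\ref{sec:Macr-field-equat}, tested against a smooth compactly supported test function, with all nonlinear terms handled by the Young-measure convergence of Theorem~\ref{App:Theo1}. First I would fix $\phi\in\fspace{C}_\comp^\infty\at\Omega$ and abbreviate $\phi_j^\eps\at{t}:=\phi\pair{\eps t}{\eps j}$. Starting from the discrete mass law \eqref{eq:DCL.MassMom}$_1$, I multiply by $\eps\,\phi_j^\eps\at t$, sum over $j\in\Z$ and integrate over $t>0$; since the law holds pointwise, the result is zero. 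I then integrate by parts in $t$ (no boundary terms arise, as $\phi$ is supported in the open set $\Omega$, so in particular away from $\tau=0$) and perform a discrete summation by parts in $j$ on the difference term $v_{j+1}-v_j$. Because $\tfrac{\dint}{\dint t}\phi_j^\eps=\eps\at{\partial_\tau\phi}_j^\eps$ and $\phi_j^\eps-\phi_{j-1}^\eps=\eps\at{\partial_\xi\phi}_j^\eps+\bigo{\eps^2}$, both terms acquire the single factor $\eps$ dictated by the hyperbolic scaling \eqref{eq:NN.Scaling}, and the pairing reduces to $-\eps^2\sum_j\int r_j\at{\partial_\tau\phi}_j^\eps\,\dint t+\eps^2\sum_j\int v_j\at{\partial_\xi\phi}_j^\eps\,\dint t$ plus a Taylor remainder. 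That remainder carries a prefactor $\bigo{\eps^3}$ against a lattice sum of only $\bigo{\eps^{-2}}$ nonzero terms (the support of $\phi$ being fixed and $r_j,v_j$ bounded), hence is $\bigo{\eps}$ and vanishes.

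Next I would pass to the limit. Identifying the prefactor $\eps^2\sum_j\int_0^\infty\at\cdot_j^\eps\,\dint t$ with the integral $\int_\Omega\at\cdot\,\dint\tau\,\dint\xi$ of the piecewise-constant extension \eqref{App:Identification} (up to a uniformly $\bigo\eps$ sampling error in the test function), and using that $r_\eps,v_\eps$ converge weakly$\star$ to their Young-measure means $\average{r},\average{v}$ through \eqref{App:YMLimitFormula}, I obtain $\int_\Omega\bat{-\average{r}\partial_\tau\phi+\average{v}\partial_\xi\phi}=0$, which is \eqref{eq:NN.Limit}$_1$ in the distributional sense on $\Omega$. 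The momentum law \eqref{eq:NN.Limit}$_2$ follows \emph{verbatim} from \eqref{eq:DCL.MassMom}$_2$, now with the nonlinear observable $\Phi^\prime\at r$; since $r$ takes values in the compact set $K$, the restriction of $\Phi^\prime$ to $K$ is a legitimate observable $\psi\in\fspace{C}\at K$, and Theorem~\ref{App:Theo1} gives $\Phi^\prime\at{r_\eps}\rightharpoonup\average{\Phi^\prime\at r}$. The energy law \eqref{eq:NN.Limit}$_3$ is derived identically from \eqref{eq:DCL.Energy}, using the continuous observables $\tfrac12 v^2+\Phi\at r$ and $v\Phi^\prime\at r$ on $K$ (the one-site shift in $\Phi\at{r_{j-1}}$ versus $\Phi\at{r_j}$ being macroscopically negligible).

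Finally I would treat the forcing, which is the step producing the dichotomy between $\Omega$ and $\tilde\Omega$. In \eqref{Eqn:ForcedNN} the force modifies only $j=0$: it adds $\zeta\at t$ to the momentum law and $v_0\at t\zeta\at t$ to the energy law. In the momentum pairing the extra contribution is $\eps\int\zeta\at t\phi\pair{\eps t}{0}\,\dint t=\int\zeta\at{s/\eps}\phi\pair s0\,\dint s$, which tends to $\bar\zeta\int\phi\pair s0\,\dint s=0$ by the zero-mean condition \eqref{App:Forcing} together with the weak convergence of the rapidly oscillating periodic $\zeta\at{\cdot/\eps}$ to its mean $\bar\zeta=0$; the mass law is unaffected by the force entirely. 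Hence mass and momentum hold on all of $\Omega$. For energy the analogous term is $\int v_0\at{s/\eps}\zeta\at{s/\eps}\phi\pair s0\,\dint s$, whose weak limit is the generally nonzero power input $\theta\at\tau$ of \eqref{eq:production2}, concentrated at $\xi=0$; thus for $\zeta\not\equiv0$ the homogeneous law fails across $\xi=0$ but survives on $\tilde\Omega=\Omega\setminus\{\xi=0\}$, where any admissible $\phi$ vanishes in a neighbourhood of $\xi=0$ so that the $j=0$ contribution drops out for small $\eps$; for $\zeta\equiv0$ there is no source and energy holds on all of $\Omega$. I expect this forcing analysis to be the main obstacle: the delicate point is the two-scale cancellation of the zero-mean force in the momentum balance combined with the correct identification of the surviving, $\xi=0$-localised energy source, rather than the (essentially routine) discrete-to-continuum passage controlling the Taylor remainders and the weak$\star$ limits.
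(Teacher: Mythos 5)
Your proposal is correct and takes essentially the same route as the paper's proof: testing the exact discrete conservation laws against $\phi\in\fspace{C}^\infty_\comp$, discrete summation by parts with $\bigo{\eps}$ Taylor remainders, Young-measure weak$\star$ limits via \eqref{App:YMLimitFormula}, and the zero-mean condition \eqref{App:Forcing} to eliminate the forcing contribution in the momentum balance while restricting energy test functions to $\tilde{\Omega}$ when $\zeta\not\equiv0$. The only cosmetic difference is that the paper first rewrites the scaled laws for the piecewise-constant interpolants using the discrete gradients $\nabla_{\pm\eps}$ and the cutoff $\chi_\eps$ before testing, whereas you work with lattice sums directly; the estimates are identical.
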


\begin{proof}
Within this proof we write
$Q_\eps\pair{\tau}{\xi}=\bpair{r_\eps\pair{\tau}{\xi}}{v_\eps\pair{\tau}{\xi}}$. The equation
of motion~\eqref{Eqn:ForcedNN}, combined with the scaling rules~\eqref{eq:NN.Scaling}
and~\eqref{App:Identification}, implies the following discrete conservation laws
  \begin{align}
    \label{App:DCL1}
    \partial_\tau{r}_\eps-\nabla_{+\eps}{v}_\eps&=0
    ,\\%
    \label{App:DCL2}
    \partial_\tau{v}_\eps-\nabla_{-\eps}\Phi^\prime\at{{r}_\eps}
    &=\zeta\at{\tau/\eps}\chi_{\eps}\at{\xi}
    ,\\%
    \label{App:DCL3}
    \partial_\tau\bat{\tfrac{1}{2}v_\eps^2
      +\Phi\at{r_\eps}-\eps\nabla_{-\eps}\Phi\at{r_\eps}}
    -\nabla_{-\eps}\bat{v_\eps\Phi^\prime\at{r_\eps}}
    &=\zeta\at{\tau/\eps}v_0\at{\tau/\eps}
    \chi_{\eps}\at{\xi}
  \end{align}
for all $\tau>0$ and almost all $\xi\in\Rset$, where the discrete differential operators
$\nabla_\pm\eps$ and the scaled cut off   function $\chi_\eps$ are given by
  \begin{align*}
    \at{\nabla_{\pm\eps}{f}}\pair{\tau}{\xi}=%
    \frac{\pm{f\pair{\tau}{\xi\pm\eps}\mp{f}\pair{\tau}{\xi}}}{\eps},
    \qquad\chi_\eps\at\xi=\frac{\chi_{\{\abs{\xi}<\eps/2\}}\at\xi}{\eps}.
  \end{align*}
We now multiply~\eqref{App:DCL1} with a test function
$\phi\in{\fspace{C}^\infty_\comp}\at{\Omega}$ and integrate with respect to both $\tau$ and
$\xi$. Using integration by parts and expansions with respect to $\eps$ we then find
\begin{align*}
    \int_\Omega
    r_\eps\partial_\tau\phi-v_\eps\partial_\xi\phi\,\dint\tau\dint\xi=O\at{\eps},
  \end{align*}
so the limit $\eps\to0$ provides~\eqref{eq:NN.Limit}$_1$ in the  sense of distributions, see \eqref{App:YMLimitFormula}.
Similarly, and using that~\eqref{App:Forcing} implies
  \begin{align*}
    \int_\Omega\zeta\at{\tau/\eps}\chi_{\eps}\at{\xi}
    \phi\pair{\tau}{\xi}\,\dint\tau\dint\xi=
    \eps\int_0^\infty\zeta\at{t}\phi\pair{\eps{t}}{0}\,\dint{t}+{O}\at\eps
    ={O}\at\eps,
  \end{align*}
  we derive~\eqref{eq:NN.Limit}$_2$ from~\eqref{App:DCL2}. Finally, the assertions about the
energy conservation follow from~\eqref{App:DCL3}, where for $\zeta\neq0$ we assume that all
test functions $\phi$ are compactly supported in $\tilde{\Om}$.
\end{proof}
Since the energy is conserved in $\tilde{\Omega}$ we can balance the energy in the whole domain
$\Omega$ via~\eqref{Eqn.EnergyBalanceWithForcing}.
%
\section*{Acknowledgements} MH was supported by the EPSRC Science and Innovation award to the
Oxford Centre for Nonlinear PDE (EP/E035027/1). JZ gratefully acknowledges funding from the
Royal Society (TG100352) and EPSRC (EP/H05023X/1, EP/F03685X/1).
%
%
%
%
\def\cprime{$'$} \def\cprime{$'$} \def\cprime{$'$}
  \def\polhk#1{\setbox0=\hbox{#1}{\ooalign{\hidewidth
  \lower1.5ex\hbox{`}\hidewidth\crcr\unhbox0}}} \def\cprime{$'$}
  \def\cprime{$'$}
\providecommand{\bysame}{\leavevmode\hbox to3em{\hrulefill}\thinspace}
\providecommand{\MR}{\relax\ifhmode\unskip\space\fi MR }
\providecommand{\MRhref}[2]{%
  \href{http://www.ams.org/mathscinet-getitem?mr=#1}{#2}
}
\providecommand{\href}[2]{#2}

\end{document}